\documentclass{article}
\usepackage{graphicx,amssymb,amsmath,amsthm}
\usepackage[margin=1in]{geometry}
\usepackage{subcaption}

\newtheorem{lemma}{Lemma}[section]
\newtheorem{theorem}{Theorem}[section]

\newcommand{\dist}{\mathsf{d}}

\newcommand{\ball}{\mathsf{ball}}
\newcommand{\e}{\varepsilon}
\newcommand{\out}{\mathsf{output}}
\newcommand{\spread}{\Delta}
\newcommand{\pts}{\mathsf{points}}
\renewcommand{\dim}{\mathsf{dim}}

\title{The Product Range Search Problem in Doubling Metrics}

\author{Oliver Chubet\thanks{North Carolina State University, \texttt{oachubet@ncsu.edu}}
	\and
	Niyathi Kukkapalli\thanks{Princeton University, \texttt{nk6074@princeton.edu}}
	\and
	Anvi Kudaraya\thanks{UC Merced, \texttt{anvikudaraya417@gmail.com}}
	\and
	Donald R. Sheehy\thanks{North Carolina State University, \texttt{don.r.sheehy@gmail.com} This work was supported by the U.S. National Science Foundation under grant CCF-2349179}
}

\begin{document}
\maketitle

\begin{abstract}
    We present two data structures for a generalization of orthogonal range search called product range search in which the coordinates are replaced with arbitrary doubling metrics.
    A standard metric range search takes a query $(q,r)$ and finds all points within distance $r$ of the point $q$.
    In the product setting, we have two (or more) different metrics, say $\dist_1$ and $\dist_2$.
    A product range query $(q, r_1, r_2)$ is a point $q$ and two radii $r_1$ and $r_2$.
    The output is all points within distance $r_1$ of $q$ with respect to $\dist_1$ and all points within $r_2$ of $q$ with respect to $\dist_2$.
    In other words, it is the intersection of two searches.
    Our results apply to any number of metrics.
    
    The first data structure is a generalization of the classic range tree from computational geometry using greedy trees rather than binary trees.
    The second data structure is a single metric tree constructed on the $\ell_\infty$ product of the input metrics.
    
\end{abstract}

\section{Introduction}

The metric range search problem is as follows: given a set of points $P$ in a metric space, preprocess $P$ into a data structure where queries are of the form $(q,r)$ where $q$ is a point in $P$ and $r \ge 0$ is a real number.
Such a query returns the points in $P \cap \ball(q, r)$.
In one dimension, the metric balls are intervals and
products of intervals are axis-aligned boxes.
The corresponding range search problem is known as \emph{orthogonal range search}.
Thus, orthogonal range search is a product of one-dimensional range searches.
In this paper, we generalize the metric range search problem to product metric range searches in doubling metrics.
For two metrics, $\dist_1$ and $\dist_2$, the queries will be of the form $(q, r_1, r_2)$ and the result will be: 

\[
    \{x\in P \mid \dist_1(x,q)\le r_1 \text{ and } \dist_2(x,q) \le r_2\}.
\] 

It is the intersection of two range searches $(q, r_1)$ with respect to $\dist_1$ and $(q, r_2)$ with respect to $\dist_2$. We call this the \textbf{Product Range Search Problem}.
We consider the more general setting where there is a sequence of $m$ metrics $\dist_1,\ldots, \dist_m$ on a set $P$.

In this paper, we present and analyze two data structures for product range search in doubling metrics.
The first structure is a cascade of search trees similar to range trees used in the Euclidean setting.
The second product range tree directly uses a metric tree called a greedy tree on the product metric itself. 
However, the query algorithm must be modified to accommodate products of balls, which may not be balls in the product metric.
The two approaches give two alternatives for combining the metrics, the first addresses them in sequence, pruning the search and the second combines them for a single search.
Asymptotically, neither is strictly better than the other.

In settings where distance computations can be computed efficiently, there are several efficient data structures available, and some authors even consider range search a ``solved'' problem~\cite{agarwal1999geometric}.
However, there are still cases when range queries over the full space may be impractical due to expensive metric computations.
Recently, product metrics have been considered for cases where metric computations are expensive to compute, but the metric admits some decomposability into a product, such as is the case for the Ulam metric, discrete Fr\'{e}chet distance, and dynamic time warping~\cite{afshani2018complexity,emiris2020products}.
Other applications that can be formulated as product metrics are multi-key range queries in databases~\cite{bentley1979data}.
Search problems in product metrics has also been studied for metrics that decompose into $l_p$ products of $l_\infty$ or $l_1$ spaces~\cite{andoni2009overcoming}.
To the best of our knowledge, product range search has not been studied for products of non-normed spaces.

\subsection{Related Work}

There is a plethora of data structures for range search in Euclidean space.
For coordinate-wise comparisons, the $k$-d tree partitions the plane into rectangular regions and has query time of $O(n^{1 - 1/d} + k)$ for a $d$-dimensional query~\cite{de2008computational}.
There are also range trees for orthogonal range queries, which have a query time of $O(\log^{d-1}n + k)$ using fractional cascading~\cite{de2008computational}. 

Ball trees are the simplest hierarchical structure for range search in metric spaces.
They are defined by recursively partitioning the data set, represented as a binary tree~\cite{omohundro1989five}.
Each node of the tree represents a metric ball, storing a center and radius.
Range searches are performed by traversing the tree.
Nodes can be pruned when the entire ball is disjoint from the query ball.
Greedy trees are ball trees constructed using farthest point sampling to achieve packing and covering guarantees~\cite{chubet2023proximity}.
The algorithms we propose in this work are built with greedy trees because they are simple, implementable, and admit strong theoretical guarantees, but other similar trees could work.

\subsection{Contribution and Outline}

This work builds on general metric space range search data structures.
The background on the products of metric spaces, the doubling dimension, and standard packing bounds in these spaces is presented in Section~\ref{sec:background}.
That section also recounts the necessary background on greedy trees, which will be used throughout.

Our first data structure, the \emph{greedy range tree}, is explained and analyzed in Section~\ref{sec:cascading}. 
It uses greedy trees in place of the binary trees used in Euclidean range trees.
Given an $n$-point set $P$ and $m$ metrics $\dist_1, \dots, \dist_m$, the greedy range tree can be built in $2^{O(\delta)} n \log^{m-1} \spread$ time, where $\delta = \sum_{i=1}^m \dim(P, \dist_i)$ is the sum of the doubling dimensions (see Section~\ref{sec:doubling_dimension}), and the spread, $\spread$, is the ratio between the maximum and minimum pairwise distances. For simplicity, we assume the spread of each metric is the same, otherwise we can take the maximum.
The greedy range tree can be stored in $2^{O(\delta)} n \log^{m-1} \spread$ space.
Each query takes time
\[
    \left(\frac{1}{\e} \right)^{O(\delta)} m\log \spread+ O(k_\e),
\]
where $k_\e$ is the number of points in the output assuming the ranges are all increased by a factor of $(1+\e)$ (see Section~\ref{sec:using_approximation}).

Note that this data structure can be easily generalized to use \emph{any} ball tree.
For example, if the underlying tree is a net-tree, then the query time analysis will be similar.
However, in order to achieve the reported preprocessing time, we rely on recent results allowing us to merge greedy trees in linear time~\cite{chubet2024simple}.

In Section~\ref{sec:non-cascading} we explain and analyze product range queries in a regular greedy tree built using the product metric itself (see also Section~\ref{sec:products} for the formal definition).
The data structure can be constructed in $2^{O(\delta)}n \log n$ time and uses only $O(n)$ space.
As in the greedy range tree, the running time depends on the approximate output size $k_{\e}$ for a user defined $\e\le \frac{1}{2}$.
A product range search takes 
\[
    \left(\frac{A}{\e} \right)^{O(\delta)} \log \Delta + O(k_\e)
\]
time, where $A$ denotes the ratio between the maximum and minimum radii in the query.
Table~\ref{tab:ds_comparison} highlights the tradeoffs between the greedy range tree and the greedy tree on the product metric in terms of storage, build time, and query complexity.

\begin{table*}[ht]
    \centering
    \renewcommand{\arraystretch}{1.5} 
    \begin{tabular}{|p{3cm}|c|c|c|}
        \hline
        &\textbf{Preprocessing} & \textbf{Query Time} & \textbf{Space} \\
        \hline
        Greedy Range Tree (Section~\ref{sec:cascading}) & $2^{O(\delta)} n \log^{m-1} \spread$& $\left(\frac{1}{\e} \right)^{O(\delta)} m\log \Delta + O(k_\e)$ & $2^{O(\delta)} n \log^{m-1} \Delta$\\
        \hline
        Greedy Tree on the Product Metric (Section~\ref{sec:non-cascading}) & $2^{O(\delta)}n \log n$ & $\left(\frac{A}{\e}\right)^{O(\delta)}\log\spread + O(k_\e)$ & $O(n)$\\
        \hline
    \end{tabular}

        \caption{The preprocessing times, query times, and space used by the data structures presented in Sections~\ref{sec:cascading} and~\ref{sec:non-cascading}.
    We assume the trees are built on an $n$-point set $P$ with $m$ metrics $\dist_1, \dots, \dist_m$ with spread $\spread$.
    We define $\delta := \sum_{i=1}^m\dim(P, \dist_i)$ the sum of the doubling dimensions.
    $A$ is the aspect ratio of a query, defined as the ratio of the maximum query radius to the minimum query radius, $k_\e$ denotes the approximate output size, and $\e$ is an approximation parameter.}
    \label{tab:ds_comparison}

\end{table*}

\section{Background}
\label{sec:background}

\subsection{Product Metrics}\label{sec:products}

Let $P$ be a set of points, and $\dist_1, \dist_2, \dots, \dist_m$ metrics on $P$.
The $l_\infty$-\emph{product metric} $\dist$ is 
\[
    \dist(x, y) = \max_{1 \leq i \leq m} \dist_i(x, y).
\]
 
For clarity, when referring to a ball in metric $\dist_i$, we use the notation $\ball_i$, however when referring to a ball in the product metric itself we may omit the subscript.

\subsection{Bootstrapping Approximate Range Search}
\label{sec:using_approximation}
Given a set of points $P$ and $m$ metrics $\dist_1, \ldots, \dist_m$, 
an approximate product range search query is a tuple $(q, r_1, \dots, r_m, \e)$ containing a point $q$, radii $r_1, \ldots, r_m$, and an approximation parameter $\e$.
The $\out$ contains \emph{all} points in the intersection of the metric balls centered at a query point $q$.
Any other points in the output are \emph{approximately} contained in the intersection.
Specifically,
\[
P \cap\bigcap_{i=1}^m \ball_i(q,r_i) \subseteq \out \subset P\cap\bigcap_{i=1}^m \ball_i\left(q, (1 + \e) r_i)\right).
\]

\begin{figure}[ht]
    \center
    \includegraphics[scale=0.6]{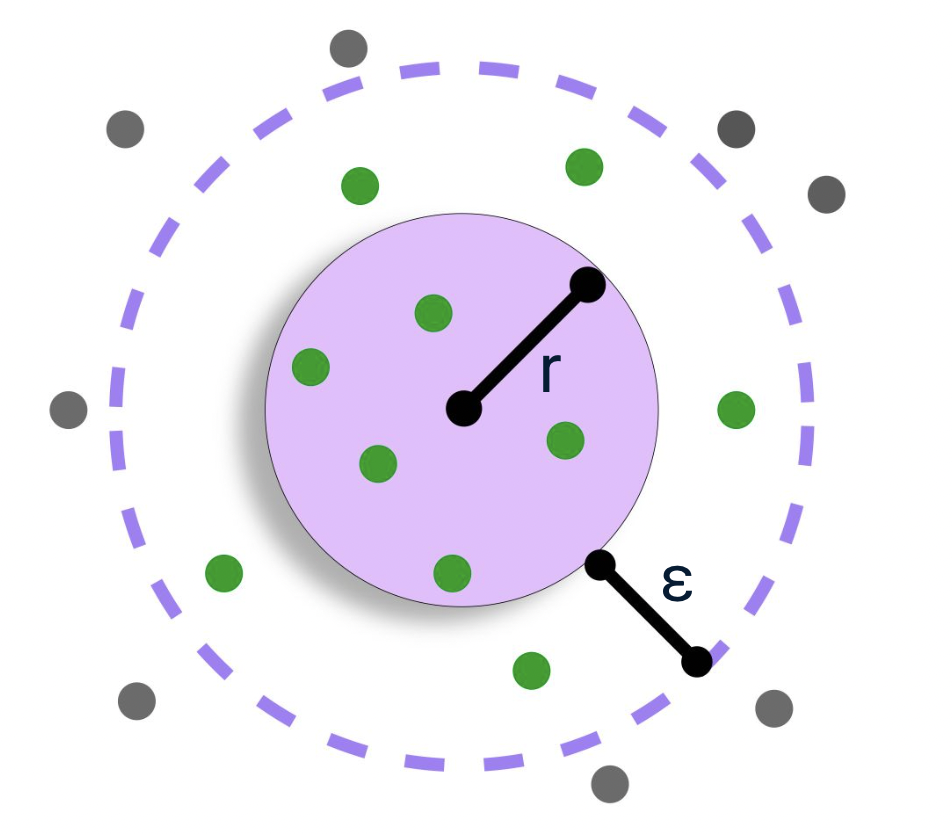}
    \caption{The solid circle shows the exact query ball of radius \( r \), while the dashed circle includes an \( \varepsilon \)-slack. All points inside the solid circle must be returned; those in the dashed region are optional under the \( (1 + \varepsilon) \)-approximation.
}
\end{figure}

When doing an exact range search, it suffices to first identify the approximate output set and then iterate through the points to see which lie strictly within the desired range.
We will use notation $k_\e$ in the output-sensitive lower bounds to indicate the number of points in the slightly larger range.  That is
\[
    k_\e := \left|P\cap \bigcap_{i=1}^m\ball_i\left(q, (1 + \e) r_i)\right) \right|
\]

For convenience, we assume $\e\le \frac{1}{2}$ throughout.
This is not strictly necessary and if one uses a larger value, one can replace all terms of the form $\left(\frac{1}{\e}\right)^{O(\delta)}$ with terms of the form $2^{O(\delta)}$.
In all cases, we have a tradeoff between the increased query time required for smaller $\e$ and the increased number of points filtered for larger $\e$.

\subsection{Doubling Dimension and Packing}
\label{sec:doubling_dimension}

Doubling dimension offers a proxy for volume in finite metric spaces.
The \emph{diameter} of a set is the maximum pairwise distance.
The \emph{doubling constant} $\rho$ is the minimum integer $\rho$ such that any ball of diameter $2r$ can be covered with $\rho$ sets of diameter at most $r$.
We define the \emph{doubling dimension} on a metric $(P, \dist)$ as $\dim(P,\dist) :=\log_2(\rho)$. 
Note that sometimes, the covering sets are required to be metric balls.
That definition is equivalent up to constant factors, but this definition allows for the following useful lemmas.

\begin{lemma}[Monotonicity of Dimension]
    If $Q\subseteq P$ and $P$ has metric $\dist$ of doubling dimension $\delta$, then $Q$ has doubling dimension at most $\delta$.
\end{lemma}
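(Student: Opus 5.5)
The plan is to show directly that the doubling constant of $(Q,\dist)$ is at most the doubling constant $\rho = 2^\delta$ of $(P,\dist)$. Since $\log_2$ is monotone, $\dim(Q,\dist)\le\delta$ then follows immediately. The natural route is to restrict coverings of $P$ to $Q$. The definition the paper chose, covering by arbitrary sets of small diameter rather than by metric balls, is exactly what makes this restriction work.

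First, fix a ball $B_Q = \ball_Q(x,r) = \{y\in Q : \dist(x,y)\le r\}$ in $Q$ with $x\in Q$; this is the ball "of diameter $2r$" in the sense of the definition. Because $x\in Q\subseteq P$, the same center and radius give a ball $B_P = \ball_P(x,r)$ in $P$, and $B_Q = B_P\cap Q$.

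Second, apply the doubling property of $P$. It yields sets $S_1,\dots,S_\rho\subseteq P$, each of diameter at most $r$, whose union contains $B_P$.

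Third, define $T_j = S_j\cap Q$. Diameter is a maximum of pairwise distances, so it can only decrease when passing to a subset; hence $\mathrm{diam}(T_j)\le\mathrm{diam}(S_j)\le r$. The union of the $T_j$ is $\left(\bigcup_j S_j\right)\cap Q\supseteq B_P\cap Q = B_Q$. So $B_Q$ is covered by at most $\rho$ subsets of $Q$ of diameter at most $r$. As $B_Q$ was an arbitrary ball of $Q$, the doubling constant of $Q$ is at most $\rho$.

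The only real subtlety is the reason the paper remarks that this definition "allows for" the lemma. If the covering sets had to be metric balls centered in the space, the sets $T_j$ would be balls of $P$ intersected with $Q$, and their centers need not lie in $Q$. One would then have to re-center them, for example at an arbitrary point of $T_j$, which doubles the radius and only gives the bound up to a constant factor. With the set-diameter definition this obstacle disappears, and I expect no step to be difficult beyond making sure the ball in $Q$ is realized as the intersection of a ball in $P$ with $Q$, which holds because the center $x$ lies in $Q\subseteq P$.
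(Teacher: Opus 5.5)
Your argument is correct: reading ``ball of diameter $2r$'' as a ball of radius $r$ centered at a point of the space, the center $x\in Q$ gives $\ball_Q(x,r)=\ball_P(x,r)\cap Q$. Intersecting the $\rho$ covering sets of diameter at most $r$ with $Q$ then shows that the doubling constant of $Q$ is at most that of $P$. The paper gives no proof of this lemma and only remarks that allowing arbitrary small-diameter covering sets (rather than balls) ``allows for'' it; your restriction argument is exactly the reasoning that remark points to.
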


\begin{lemma}[The Dimension of Products]\label{lem:product_dimension}
Let $P$ be a finite set, with a metric $\dist$ that is the product of metrics $\dist_1$ and $\dist_2$.
Then $\dim(P,\dist) \leq \dim(P,\dist_1) + \dim(P,\dist_2)$.
\end{lemma}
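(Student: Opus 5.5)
The plan is to work directly with the set-cover definition of doubling dimension. That definition is exactly what makes products behave well, because a product of two small-diameter sets is again a small-diameter set in the $\ell_\infty$ product. Write $\rho_1 = 2^{\dim(P,\dist_1)}$ and $\rho_2 = 2^{\dim(P,\dist_2)}$ for the doubling constants. It suffices to show that every subset $B\subseteq P$ of $\dist$-diameter at most $2r$ (in particular every ball of diameter $2r$) can be covered by at most $\rho_1\rho_2$ sets, each of $\dist$-diameter at most $r$. Then $\rho \le \rho_1\rho_2$, and taking $\log_2$ gives the claim.

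First, fix such a $B$. Since $\dist = \max(\dist_1,\dist_2)$, the set $B$ has $\dist_1$-diameter at most $2r$ and $\dist_2$-diameter at most $2r$.

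Second, apply the doubling property of $(P,\dist_1)$ to $B$ to obtain sets $S_1,\dots,S_{\rho_1}$ of $\dist_1$-diameter at most $r$ covering $B$. Likewise, apply the doubling property of $(P,\dist_2)$ to obtain $T_1,\dots,T_{\rho_2}$ of $\dist_2$-diameter at most $r$ covering $B$.

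Third, form the $\rho_1\rho_2$ sets $B\cap S_i\cap T_j$. These cover $B$, since every point of $B$ lies in some $S_i$ and in some $T_j$. For any $x,y$ in one such set we have $\dist_1(x,y)\le r$ and $\dist_2(x,y)\le r$, so $\dist(x,y)\le r$.

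The main obstacle is definitional rather than technical. The definition of the doubling constant is phrased in terms of covering \emph{balls} of diameter $2r$, whereas a $\dist$-ball need not be a $\dist_i$-ball. I would handle this by checking that the definition, in the form used here, really says that any set of diameter at most $2r$ can be covered by $\rho$ sets of diameter at most $r$. The paper's remark that arbitrary covering sets are allowed is meant to support this reading. If the definition is instead read strictly as applying only to metric balls, I would first enclose $B$ in a $\dist_i$-ball about one of its points, of radius $2r$ (diameter at most $4r$), and cover that in two doubling steps. This costs a constant factor in $\rho_i$ but still gives the bound up to the constant-factor equivalence the paper mentions. I would present the clean set-based version as the main proof.
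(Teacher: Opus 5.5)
Your main argument is correct and complete under the set-based reading of the definition, in which any set of diameter at most $2r$ is covered by $\rho$ sets of diameter at most $r$. The paper states this lemma without proof, and your intersect-the-covers argument is the standard one that this choice of definition is meant to enable; with balls-by-balls covers, $S_i\cap T_j$ would not be a ball.

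One correction to your hedge concerns the enclose-and-halve-twice fallback. It replaces $\rho_i$ by $\rho_i^2$, which doubles the dimension rather than costing a constant factor in $\rho_i$. So it would give only $\dim(P,\dist)\le 2\bigl(\dim(P,\dist_1)+\dim(P,\dist_2)\bigr)$, not the stated inequality.

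If you want to cover a ball-based reading (balls specified by center and radius, covered by arbitrary sets), there is a lossless fix. Use $\ball(x,s)=\ball_1(x,s)\cap\ball_2(x,s)$: cover each same-center $\dist_i$-ball and intersect the covers.
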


A set $P$ is $r$-\emph{packed} if for all distinct $a, b \in P$, we have $\dist(a, b) \geq r$.
\begin{lemma}[Standard Packing~\cite{krauthgamer2004navigating}]\label{lem:packing}
    Let $(P, \dist)$ be a metric space and $\delta = \dim{P}$. If the set $Z$ is $r$-packed and can be covered with a ball of radius $R$ then 
    \[|Z| \leq \left(\frac{4R}{r}\right)^\delta.\]
\end{lemma}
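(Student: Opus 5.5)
Write $\rho := 2^{\delta}$ for the doubling constant, so the definition gives: any set of diameter at most $2s$ is covered by at most $\rho$ sets of diameter at most $s$. Note this uses the paper's convention, where the covering pieces are arbitrary sets of small diameter rather than balls. The plan is to iterate this halving step until the pieces are too small to hold two points of $Z$, and then count pieces.

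First I reduce to a diameter statement. Since $Z$ is covered by a ball of radius $R$, the triangle inequality gives $\mathrm{diam}(Z) \le 2R$. Here I work inside the metric space $(P,\dist)$, and by monotonicity of dimension we may freely restrict to subsets. If $|Z|\le 1$ the bound is trivial (as $r\le 2R$ whenever $|Z|\ge 2$), so assume $|Z|\ge 2$, which forces $r \le 2R$.

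Next I iterate the covering. Applying the doubling property to $Z$ (diameter $\le 2R$) yields at most $\rho$ sets of diameter $\le R$. Applying it again to each piece, which is legitimate since each piece is a subset of $P$ of diameter at most $2\cdot(R/2)$, gives at most $\rho^2$ sets of diameter $\le R/2$. After $t$ rounds we have at most $\rho^t$ sets, each of diameter at most $2R/2^{t}$.

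Finally I choose $t$ and count. Any set of diameter $< r$ contains at most one point of $Z$, because $Z$ is $r$-packed. So I pick the least integer $t$ with $2R/2^t < r$, namely $2^t > 2R/r$. Minimality gives $2^{t-1}\le 2R/r$, so $2^t \le 4R/r$. Then
\[
|Z| \le \rho^t = (2^t)^{\delta} \le \left(\frac{4R}{r}\right)^{\delta}.
\]
The only delicate point is this choice of $t$. The threshold must be strict, so that pieces of diameter exactly $r$ are excluded, and that is exactly what costs the factor $4$ instead of $2$. The rest is bookkeeping.
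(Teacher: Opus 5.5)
Your iterated-halving argument is correct when $|Z|\ge 2$, and it is the standard proof behind the cited Krauthgamer--Lee bound. The paper states this lemma with a citation and gives no proof of its own, so there is no alternative route to compare. Your choice of the least $t$ with $2^t>2R/r$ is exactly right: $r\le 2R$ forces $t\ge 1$, which gives $2^t\le 4R/r$.

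There is one incorrect claim: the case $|Z|=1$ is not trivial. A singleton is $r$-packed for every $r>0$. So if $\delta>0$ and $r>4R$, then $(4R/r)^\delta<1=|Z|$ and the inequality fails. For example, take $Z=\{0\}$ in $P=\{0,1\}\subset\mathbb{R}$ (here $\delta=1$) with $R=1$ and $r=10$. Your parenthetical gives $r\le 2R$ only when $|Z|\ge 2$, which is not the case you are dismissing. So the lemma as stated needs an implicit hypothesis such as $|Z|\ge 2$ or $r\le 4R$, or else the conclusion $|Z|\le\max\{1,(4R/r)^\delta\}$. This is harmless in the paper's application, where the ratio is $4(r+r_{\max})/(cr)\ge 4$, but you should say so rather than call the case trivial.

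A smaller point concerns the definition. Your second and later rounds apply the doubling property to pieces that are not balls. So you need it in the form ``every set of diameter at most $2s$ is covered by $\rho$ sets of diameter at most $s$,'' whereas the paper's wording literally covers a \emph{ball}. The set version is clearly the intended one, since it is what makes the monotonicity and product lemmas immediate. Still, you should state it as the definition you are using, not something the literal wording gives you. With the usual definition (balls covered by balls of half the radius), the same iteration also works.
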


\subsection{Greedy Permutation and Greedy Trees}

Let $P = (p_0, p_1, \cdots p_{n-1})$ be a permutation of $n$ points.
Let $\dist$ be a metric on $P$.
The $i$-th prefix is $P_i = \{p_0, p_1, \cdots p_{i-1}\}$, the set containing the first $i$ points of $P$.
Then $P$ is a \emph{greedy permutation} if for all $i>1$,
\[
    \dist (p_i, P_i) = \max_{q\in P} \dist(q, P_i).
\]
The point $p_0$ may be chosen arbitrarily.
A greedy permutation can be computed in $O(n \log \spread)$ time for low-dimensional data, where $\spread$ is the ratio of farthest to smallest pairwise distances in $P$~\cite{har2005fast}. 
There are also $O(n\log n)$ time approximation algorithms~\cite{chubet2024simple, har2005fast}.

A \emph{greedy tree} is a binary tree with the points of $P$ in the leaves.
Each node $v$ represents the cluster of points in the leaves of its subtree, denoted $\pts(v)$ and stores a center and a radius.
The greedy tree is easily constructed at the same time as the greedy permutation.
When the point $p_i$ is added to the permutation, the node of its (approximate) nearest predecessor $q$ is given two children, one centered at $p_i$ and the other centered at $q$.
Thus, every non-leaf node has a child centered at the same point, albeit with a smaller radius.

Greedy trees allow for a variety of proximity search algorithms.
The basic paradigm is that a node in the search is considered \emph{viable} if it may contain the some of the search output based on the center and the radius.
Viabe nodes are \emph{split}, replacing them with their two childrewn.
Non-viable nodes are pruned from the search.
Well-known packing and covering properties of greedy permutations allow one to bound the number of viable nodes in a search, leading to efficient running times.

A useful property of greedy trees is that they can be constructed recursively.
In particular, merging two greedy trees is more efficient than constructing a greedy tree from scratch on the union of their point sets.
\begin{lemma}(Chubet et al.~\cite{chubet2024simple})\label{lem:construction} For $n$ points in a metric of doubling dimension $\delta$, we have the following bounds.
\begin{itemize}
    \item Greedy tree construction takes $2^{O(\delta)}n\log n$ time.
    \item Merging two greedy trees takes $2^{O(\delta)} n$ time.
    \item A greedy tree requires $O(n)$ space.
\end{itemize}
\end{lemma}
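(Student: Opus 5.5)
The three bullets are of different difficulty. I would dispose of the space bound first, then reduce construction to merging, and spend most of the effort on the merge.

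\textbf{Space.} A greedy tree is a binary tree whose leaves are the $n$ points of $P$, and every internal node has exactly two children. So it has exactly $2n-1$ nodes, each storing a center (a pointer to a point of $P$), a radius, and two child pointers. This gives $O(n)$ space with no dependence on $\delta$.

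\textbf{Construction from merging.} Assuming the merge bound, I would build the greedy tree on $P$ by divide and conquer. Split $P$ arbitrarily into two halves, recursively build greedy trees on each half, and merge them. By monotonicity of dimension every subset has doubling dimension at most $\delta$, so each merge of total size $s$ costs $2^{O(\delta)}s$. The recurrence $T(n)=2T(n/2)+2^{O(\delta)}n$ then gives $T(n)=2^{O(\delta)}n\log n$, which is the first bullet.

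\textbf{Merging.} Given greedy trees for $A$ and $B$, I would read off their greedy permutations together with each point's insertion radius $\dist(p_i,P_i)$ and its predecessor (parent center). I would then produce the greedy permutation of $A\cup B$ by a merge-like sweep that repeatedly emits the remaining point farthest from the already-emitted prefix.

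To find that point without a $\log n$ priority queue, I would bucket the candidates by insertion radius into scales $2^j$, so that the emitted order is greedy up to the constant factor the tree definition tolerates. Each point's candidate radius can only decrease, and only when a newly emitted point lands near it. I would maintain a neighbor graph linking each emitted point to the emitted points within a constant multiple of its current radius. The packing bound (Lemma~\ref{lem:packing}), applied in the doubling space of $A\cup B$, bounds every neighborhood by $2^{O(\delta)}$.

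When a point $p$ is emitted, I would proceed as follows:
\begin{itemize}
\item Find its (approximate) nearest predecessor, and hence its new parent in the merged tree, by searching the neighborhood of its old parent in its own tree together with that parent's union neighbors.
\item Update the radii of the $2^{O(\delta)}$ affected candidates.
\item Hang $p$ and a copy of its predecessor as the two children of the predecessor's current leaf, exactly as in the tree definition.
\end{itemize}
Each emission costs $2^{O(\delta)}$ amortized, giving $2^{O(\delta)}n$ overall.

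\textbf{Main obstacle.} I expect the hard step to be certifying that the bucketed sweep never emits a point out of order by more than the allowed constant factor. The risk is that inserting $B$-points can shrink the radii of $A$-points so that they no longer appear in the order of $A$'s own permutation. My first attempt would be a charging argument: every key decrease is witnessed by a neighbor-graph edge, and a point changes buckets $O(1)$ times per scale at which it has a neighbor. Combined with packing, this bounds the total bucket-movement work by $2^{O(\delta)}n$.
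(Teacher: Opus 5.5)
The space bound and the mergesort recursion are fine. The paper itself gives no proof of this lemma; it imports all three bullets from Chubet et al. Everything therefore rests on the linear-time merge, and there your sketch has a genuine gap. It is the obstacle you flag yourself, left unresolved: you never specify which unemitted points are \emph{candidates}, and both natural choices fail.

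On a line, take $A=\{0,15,20\}$ and $B=\{20+\eta\}$ with $\eta$ tiny. The greedy order of $A$ from $0$ is $0,20,15$, so the parent of $15$ in $A$ is $20$. A greedy order of $A\cup B$ from $0$ emits $20+\eta$ before $20$, since it is farther. After that, $20$ is at distance $\eta$ from the prefix while $15$ is at distance $5+\eta$, so $15$ must be emitted before its own-tree parent.

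\begin{itemize}
\item Suppose the candidates are the own-tree children of emitted points (plus the roots). This is what your step ``search the neighborhood of its old parent'' presupposes. Then $15$ is invisible, and the sweep emits $20$ at radius $\eta$ and only afterwards $15$ at radius about $5$. The order is off by a factor of about $5/\eta$, which is unbounded. The predecessor search is also undefined for $15$, because its old parent has not been emitted and has no union neighbors.
\item Suppose instead every unemitted point is a candidate. Then the order is fine, but the claim that an emission updates only $2^{O(\delta)}$ candidates is false, because unemitted points are not packed. If one input is $m$ equally spaced points on a line, the sweep essentially recomputes its greedy permutation from scratch. Each emission at spacing $s$ lowers the keys of $\Theta(s)$ points, for $\Theta(m\log m)$ total work.
\end{itemize}

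Your charging argument does not repair this. It counts key decreases rather than certifying the order. It is linear only if the points touched by one emission form a packed set, which is exactly what fails in the second option. As you state it (per point, per scale), packing bounds the work per scale but not the number of scales, so it gives only $2^{O(\delta)}n\log\spread$.

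The missing idea is to bring points into play according to their \emph{own} insertion radius rather than their parent's emission status. Make $p\in A$ active once the sweep scale $s$ drops below a constant times its insertion radius in $A$, and likewise for $B$. Because each input is greedy, the active points at scale $s$ are $\Omega(s)$-packed within each input, and every inactive point is within $O(s)$ of an active one.
\begin{itemize}
\item The covering fact certifies that emitting the farthest active point is an $O(1)$-approximate greedy step.
\item The packing fact gives $2^{O(\delta)}$ work per emission, since an emission at scale $s$ can only lower keys of active points within $2s$ of it.
\end{itemize}

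Two further points still need handling.
\begin{itemize}
\item The sweep must skip empty scales. Scanning all scales $2^j$ costs $\Theta(\log\spread)$ per merge, which a $2^{O(\delta)}n$ bound does not allow.
\item Your recursive calls return only approximately greedy trees. The merge must accept approximate inputs without degrading the approximation constant, or the constants compound over the $\log n$ levels of the recursion.
\end{itemize}
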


\subsection{Range Search in Greedy Trees}

We analyze our range search algorithms using the proximity search framework for ball trees established by Chubet et al.~\cite{chubet2023proximity}.
A range search maintains a heap $H$ containing \emph{viable nodes} that can intersect the query range.
A range search maintains the invariant that any point in the query range is either covered by some node in $H$ or has been added to the output.
Additionally, each point appears in at most one node in $H$ or the output at any given time.
The width $w$ of a search is the maximum number of nodes in $H$ during any given iteration of the algorithm.
The height $h$ of a search is the maximum number of times the node containing any point $p$ splits.

\section{Greedy Range Tree on Product Metric}\label{sec:cascading}

A greedy range tree on a set $P$ and metrics $\dist_1, \dots, \dist_m$ is defined recursively, as follows:
\begin{itemize}
    \item The \emph{primary tree} is a greedy tree on all of $P$ using metric $\dist_1$.
    \item Each node $v$ of the primary tree points to an \emph{auxiliary} greedy range tree on $\pts(v)$ with metrics $\dist_2, \dots, \dist_m$.
\end{itemize}

\begin{figure}[h]
    \center
    \begin{subfigure}{0.45\textwidth}
    \includegraphics[width=\textwidth]{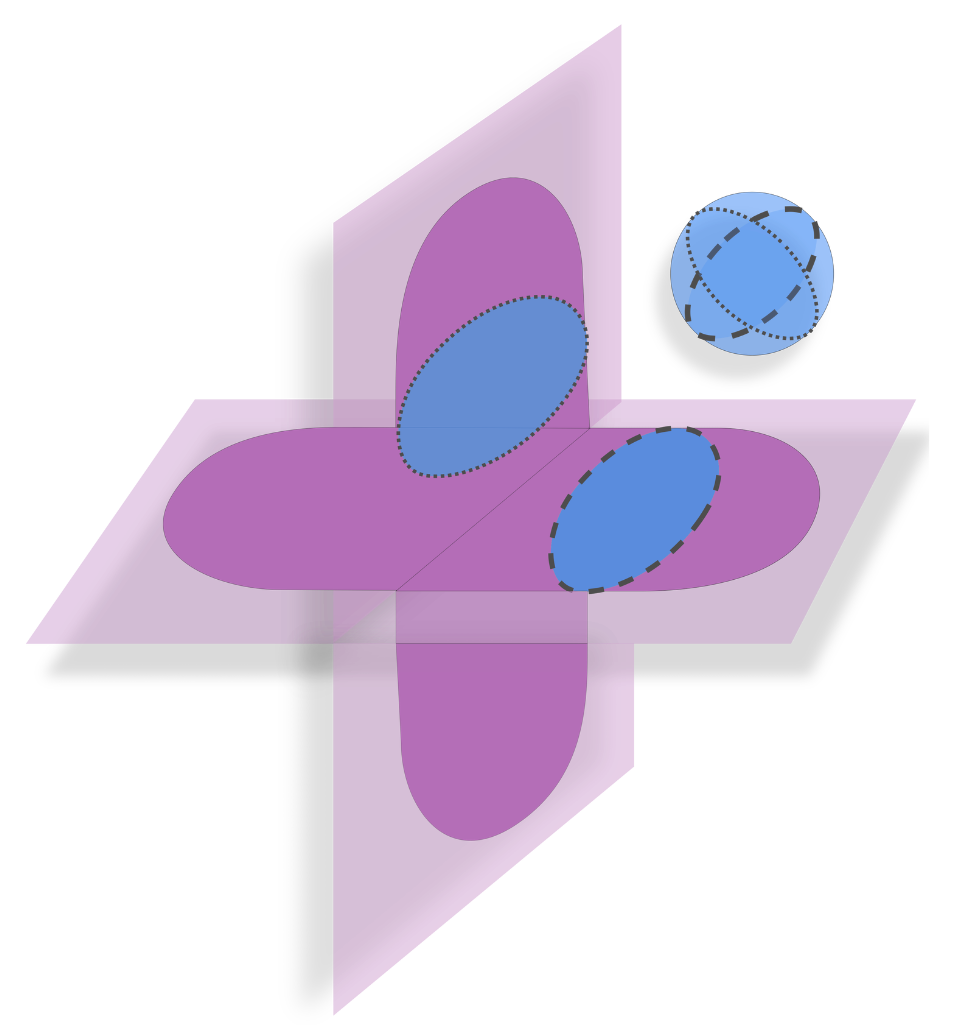}
    \caption{
    Each node of a greedy range tree corresponds to an intersection of balls from different metrics. 
    }
    \label{fig:grt}
    \end{subfigure}
    \hfill
    \begin{subfigure}{0.45\textwidth}
    \includegraphics[width=\textwidth]{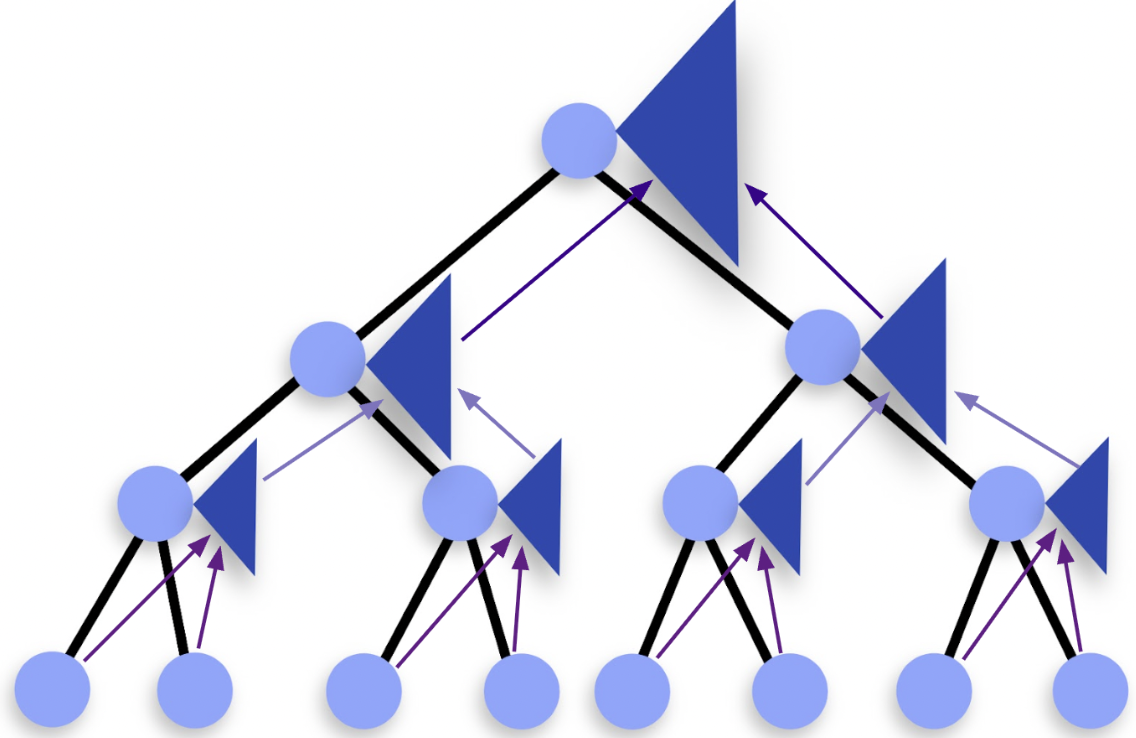}
    \caption{The auxiliary greedy range trees are built bottom-up. Each node merges the auxiliary trees of its children.}
    \label{fig:construct}
    \end{subfigure}
\end{figure}
\subsection{Construction}

The construction of a greedy range tree is also recursive (see Figure~\ref{fig:construct}).
First, we construct a greedy tree for $\dist_1$.
We observe that for a node $u$ with children $u_L$ and $u_R$, we have $\pts(u) = \pts(u_L) \cup \pts(u_R)$. 
So, rather than constructing the greedy tree on $\pts(u)$ for each node $u$, we merge the trees of its children.
The bottom-up merging process is repeated recursively.

The height of a greedy tree is $2^{\O(\delta)}\log(\spread)$~\cite{chubet2023proximity}.
Each point is contained in at most one node per depth from the root per level, so each point participates in $O(\log\spread)$ merges.
Thus, each level adds a multiplicative factor of $\log \spread$ to the construction time and space.
Note that for simplicity our analysis uses $\spread = \max_{1 \le i \le m} \spread_i$, where $\spread_i$ is the spread of $P$ under metric $\dist_i$. 
Similarly, we use $\delta = \max_{1 \le i \le m}\delta_i$, where $\delta_i$ is the doubling dimension of $P$ under metric $\dist_i$.
Then by Lemma~\ref{lem:construction}, we get the following theorem.

\begin{theorem}
Let $\dist_1, \dots, \dist_m$ be $m$ metrics on an $n$-point set $P$.
A greedy range tree can be constructed in 
$2^{O(\delta)} n \log^{m-1} \spread$
time and has space complexity $O(n \log^{m-1} \spread)$. 
\end{theorem}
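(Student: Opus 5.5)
The plan is induction on the number of metrics $m$, with a charging argument. Each point's contribution to a level is bounded by the number of primary-tree nodes that contain it. Let $T(n,m)$ and $S(n,m)$ denote the construction time and space for a greedy range tree on $n$ points with $m$ metrics. I will show $T(n,m) \le 2^{O(\delta)} n\log^{m-1}\spread$ and $S(n,m) = O(n \log^{m-1}\spread)$.

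\textbf{Base case, $m=1$.} The structure is a single greedy tree. By Lemma~\ref{lem:construction} it occupies $O(n)$ space and takes $2^{O(\delta)} n\log n$ time to build. To reconcile $\log n$ with the claimed bound $2^{O(\delta)}n$, I would use that the $m=1$ term in the statement carries no $\log$ factor. I would then either absorb $\log n$ into the surrounding bound, as the paper does (at $m\ge 2$ the $\log n$ is dominated because $\log n = O(\delta\log\spread)$ by packing), or accept the base cost as a one-time additive term. I expect this reconciliation to be the main nuisance.

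\textbf{Inductive step.} Consider the primary tree on $\dist_1$. For a point $p$, the key fact is that the number of primary nodes $v$ with $p\in\pts(v)$ is at most the height of the tree. That height is $2^{O(\delta)}\log\spread$; more precisely, $p$ lies in at most one node per greedy-tree level, of which there are $O(\log\spread)$ geometric scales, with $2^{O(\delta)}$ nodes per scale. Hence $\sum_v |\pts(v)| \le 2^{O(\delta)} n\log\spread$.

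\emph{Space.} Each auxiliary structure at $v$ is a greedy range tree on $|\pts(v)|$ points with $m-1$ metrics. By induction it uses $O(|\pts(v)|\log^{m-2}\spread)$ space. Summing over $v$ gives $O(n\log^{m-1}\spread)$, provided the per-point count of containing nodes is $O(\log\spread)$ rather than carrying a $2^{O(\delta)}$ factor. I would argue this via the paper's claim that each point participates in at most one node per level, with levels indexed by radius scale.

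\emph{Time.} Build the primary greedy tree once. Then process its nodes bottom-up. A leaf's auxiliary structure is trivial. An internal node $u$ with children $u_L,u_R$ obtains its auxiliary structure by merging those of the children, using the merge bound of Lemma~\ref{lem:construction}. For $m-1\ge 2$, merging must be applied recursively: the top-level $\dist_2$ greedy trees are merged in $2^{O(\delta)}|\pts(u)|$ time, and the nested auxiliary structures must then be rebuilt. I would handle this by having each node's $(m-1)$-metric structure built by the inductive procedure itself. Its cost of $2^{O(\delta)}|\pts(u)|\log^{m-2}\spread$ is then charged to the points of $u$, and summing over nodes gives $2^{O(\delta)}n\log^{m-1}\spread$.

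The main obstacle I expect is justifying that the merge-based construction of nested levels costs only linear work per point per node, rather than rebuilding from scratch at $\log n$ cost. If merging only handles the first metric, I would fall back on building each auxiliary structure recursively from its children's sub-structures, so that each level contributes one multiplicative $\log\spread$ factor.
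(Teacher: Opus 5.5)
Your plan is the paper's own argument: build the $\dist_1$ greedy tree, fill in the auxiliary structures bottom-up by merging the children's structures, and charge each point to the primary nodes that contain it, so that each metric contributes one $\log\spread$ factor. The genuine gap is in the space step. You get $O(n\log^{m-1}\spread)$ only by assuming each point lies in $O(\log\spread)$ primary nodes (``at most one node per level,'' with levels indexed by radius scale). That assumption is false for greedy trees, as your own count of up to $2^{O(\delta)}$ nodes per scale already suggests. Take the origin and $e_1,\dots,e_d$ in $\mathbb{R}^d$, with the greedy permutation started at the origin. Every $e_j$ has the origin as its nearest predecessor, so the origin's node is split $d$ times at a single scale and $\sum_v|\pts(v)|=\Theta(n^2)$, while $\spread=\sqrt2$ and $2^{\delta}=\Theta(n)$. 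With $\dist_2=\dist_1$, the auxiliary trees alone occupy $\Theta(n^2)$ space, not $O(n\log\spread)$. What your charging argument actually proves is $2^{O(\delta)}n\log^{m-1}\spread$, which is the bound listed in Table~\ref{tab:ds_comparison}. The paper's proof makes the same unjustified jump: it says ``each point participates in $O(\log\spread)$ merges'' right after citing height $2^{O(\delta)}\log\spread$. So appealing to it does not close the step.

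The time analysis, by contrast, goes through for $m\ge2$, and the obstacle you anticipate is not one. Anchor the induction at $m=2$. There the primary tree costs $2^{O(\delta)}n\log n=2^{O(\delta)}n\log\spread$ because $n\le(4\spread)^{\delta}$, and each auxiliary $\dist_2$ tree is one merge costing $2^{O(\delta)}|\pts(u)|$. For $m\ge3$, every auxiliary structure has at least two metrics. So even rebuilding it from scratch costs $2^{O(\delta)}|\pts(u)|\log^{m-2}\spread$ by the inductive hypothesis, and summing over $u$ gives the claim. Alternatively, induct on the cost of building everything below a given top-level tree, which is pure merging.

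At $m=1$ the claimed $2^{O(\delta)}n$ is not what Lemma~\ref{lem:construction} gives, and neither absorption nor an additive term repairs that. This is a slip in the statement, which should be read for $m\ge2$. Finally, the $2^{O(\delta)}$ factors compound over the $m-1$ levels, giving $2^{O(\sum_i\delta_i)}$. That matches the claim only if $\delta$ is the sum of the dimensions, as in the introduction, not the maximum adopted in Section~\ref{sec:cascading}.
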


\subsection{Queries}

Efficient queries in the greedy range tree exploit the fact that a standard approximate range search in a greedy tree produces a set of $(1/\e)^{O(\delta)}$ viable nodes~\cite{chubet2023proximity}.
This is essentially a cover of the desired range.
For the greedy range tree, we start with such a search using the first metric.
Instead of iterating over the points in the resulting nodes, we recursively search the auxiliary trees associated with these nodes. 

The first search of the query returns a collection of nodes that are contained in $\ball_{1}(q, (1+\e)r_1)$.
This takes $(\frac{1}{\e})^{O(\delta_1)}\log \spread$ time and returns a collection of $(\frac{1}{\e})^{O(\delta_1)}$ nodes.
Then, we query each node returned with $(q, r_2)$ in $\dist_2$.
This takes $(\frac{1}{\e})^{O(\delta_2)}\log\spread$ time per node.
We continue similarly for each metric.
Note that the resulting collection size may increase by a factor of $(\frac{1}{\e})^{O(\delta_i)}$ for each layer $i$ of the greedy range tree.
The bound on the output size of each layer follows from the bounded width of a greedy tree range search~\cite{chubet2023proximity}.

\begin{figure}[h]
    \centering
    \includegraphics[width=0.85\textwidth]{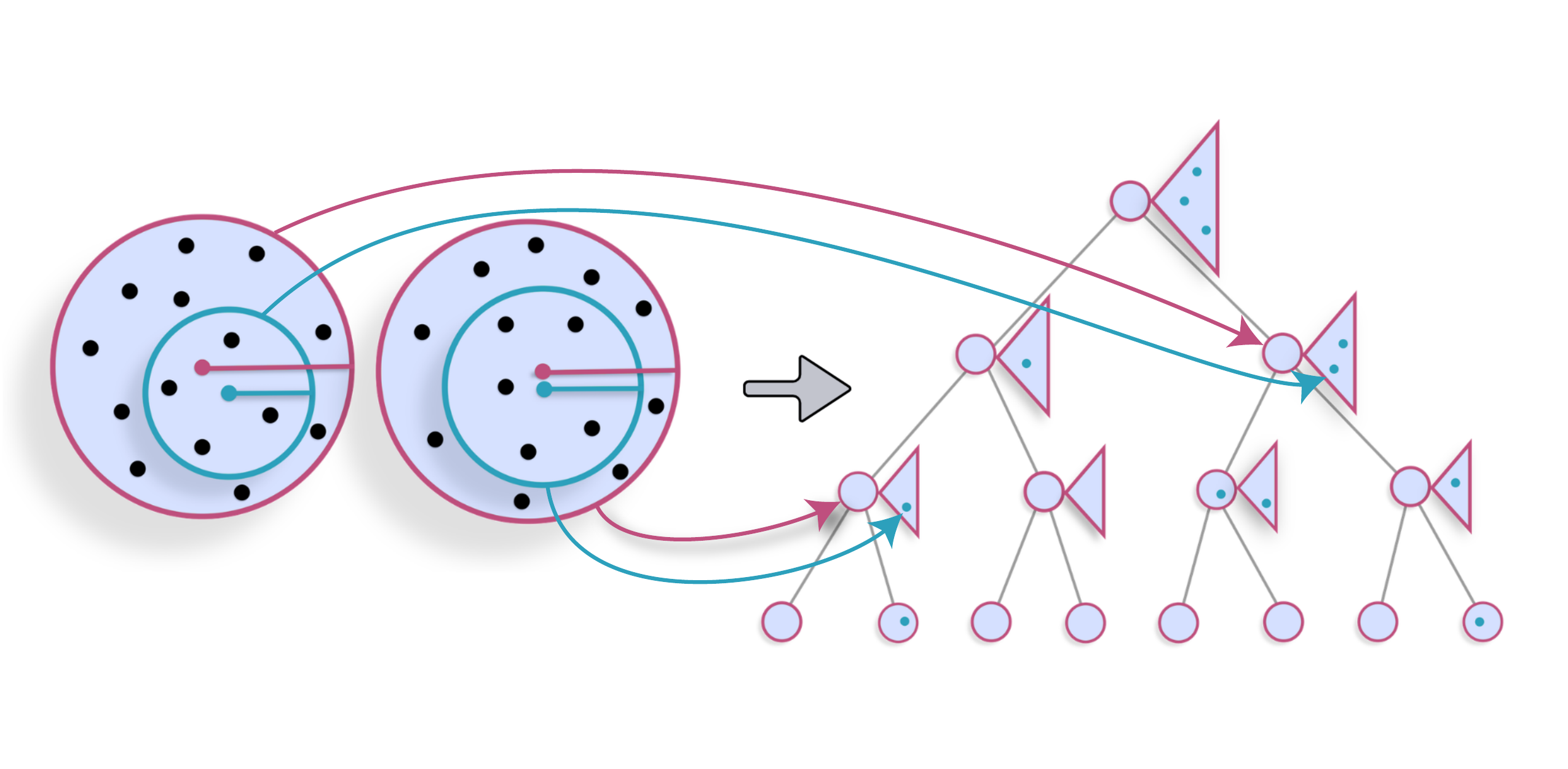}    
    \caption{Nodes (balls) in one level of the search point to trees in the next level.}
\end{figure}

\begin{theorem}
An $m$-metric query in a greedy range tree takes $(\frac{1}{\e})^{O(\delta)}m\log\spread + O(k_\e)$ time, where $\delta = \sum_{i=1}^m\delta_i$, and $k_e$ is the size of the $(1+\e)$-approximate output.
\end{theorem}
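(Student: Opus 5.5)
The proof goes by induction on the number of metrics $m$, tracking the collection of nodes that survives each layer of the cascade. The inductive claim: after processing metrics $\dist_1,\dots,\dist_j$, the algorithm holds a collection $\mathcal{C}_j$ of nodes with three properties. First, every node lies in an auxiliary tree of level $j$. Second, the sets $\pts(v)$ for $v\in\mathcal{C}_j$ are pairwise disjoint and lie in $\bigcap_{i\le j}\ball_i(q,(1+\e)r_i)$. Third, their union contains $P\cap\bigcap_{i\le j}\ball_i(q,r_i)$. Moreover, $|\mathcal{C}_j|\le (1/\e)^{O(\delta_1+\cdots+\delta_j)}$, and producing $\mathcal{C}_j$ from $\mathcal{C}_{j-1}$ costs $(1/\e)^{O(\delta_1+\cdots+\delta_j)}\log\spread$. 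The base case is the single greedy tree approximate range search of~\cite{chubet2023proximity} in $\dist_1$. It stops splitting a node once that node is entirely contained in the $(1+\e)$-inflated ball, so it returns a cover of the range by $(1/\e)^{O(\delta_1)}$ disjoint nodes. It runs in time width times height, i.e.\ $(1/\e)^{O(\delta_1)}\cdot O(\log\spread)$.

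For the inductive step, for each $v\in\mathcal{C}_{j-1}$ we run the same approximate search with $(q,r_j)$ in the $\dist_j$-greedy tree attached to $v$, which is built on exactly $\pts(v)$.
\begin{itemize}
\item \emph{Correctness and disjointness.} Every returned node lies inside $\pts(v)$, so it inherits containment in the earlier inflated balls. It also lies in $\ball_j(q,(1+\e)r_j)$. Any point of $\pts(v)$ in $\ball_j(q,r_j)$ is covered by some returned node, by the search invariant. Disjointness holds across different $v$ because the $\pts(v)$ are disjoint, and within one $v$ by the heap invariant.
\item \emph{Size.} By Lemma 2.1, $\pts(v)$ has doubling dimension at most $\delta_j$ under $\dist_j$. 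The width bound therefore gives at most $(1/\e)^{O(\delta_j)}$ returned nodes per $v$, so $|\mathcal{C}_j|\le|\mathcal{C}_{j-1}|(1/\e)^{O(\delta_j)}$.
\item \emph{Cost.} Each search costs $(1/\e)^{O(\delta_j)}\log\spread$, since the height of a search is $O(\log\spread)$ in terms of the spread of $P$ under $\dist_j$, and restricting to a subset cannot increase the spread beyond $\spread$. The level-$j$ cost is thus $(1/\e)^{O(\delta_1+\cdots+\delta_j)}\log\spread$.
\end{itemize}

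Summing over $j=1,\dots,m$ bounds each term by $(1/\e)^{O(\delta)}\log\spread$, which gives $(1/\e)^{O(\delta)}m\log\spread$. Finally, we report the points of every node in $\mathcal{C}_m$ by walking its subtree. A binary subtree with $t$ leaves has $O(t)$ nodes, and the nodes of $\mathcal{C}_m$ are disjoint and lie inside $\bigcap_i\ball_i(q,(1+\e)r_i)$, so this costs $O(k_\e)$ in total. The guarantee $P\cap\bigcap_i\ball_i(q,r_i)\subseteq\out\subseteq P\cap\bigcap_i\ball_i(q,(1+\e)r_i)$ follows from the invariant at $j=m$.

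The delicate step is the per-level cost. The height bound $O(\log\spread)$ must hold inside auxiliary trees built by merging, not only in trees built from scratch. I would rely on the merged tree being a genuine (approximate) greedy tree on $\pts(v)$, per Lemma~\ref{lem:construction}, so that the proximity-search analysis applies verbatim. I would also check that the viability test used by~\cite{chubet2023proximity} prunes or accepts a node whenever its radius is at most $\e r_j$ relative to its distance to the range. This is what keeps the width at $(1/\e)^{O(\delta_j)}$, via Lemma~\ref{lem:packing}, independently of which subset $\pts(v)$ the tree was built on.
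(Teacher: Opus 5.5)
Your proof is correct and is essentially the paper's argument. The paper inducts on $m$ by peeling off $\dist_1$ and applying the hypothesis to each of the $(\frac{1}{\e})^{O(\delta_1)}$ resulting $(m-1)$-metric subqueries, which is your layer-by-layer recurrence $|\mathcal{C}_j|\le|\mathcal{C}_{j-1}|(\frac{1}{\e})^{O(\delta_j)}$ unrolled; your checks on disjointness, subset dimension and spread, and the $O(k_\e)$ reporting pass are details the paper leaves implicit. One small correction: the search height in a greedy tree is $2^{O(\delta_j)}\log\spread$ (and the framework charges $wh\log w$), not $O(\log\spread)$, but since $\e\le\frac{1}{2}$ these factors are absorbed into $(\frac{1}{\e})^{O(\delta_j)}$, so your bound stands.
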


\begin{proof}
    We prove by induction on $m$ that the running time of the search without the final iteration over the viable nodes is $(\frac{1}{\e})^{O(\delta)}m\log\spread$.
    The case $m=1$ follows from the standard range search in a greedy tree~\cite{chubet2023proximity}.
    The first layer of the search takes $(\frac{1}{\e})^{O(\delta_1)}\log \spread$ time and returns a collection of $(\frac{1}{\e})^{O(\delta_1)}$ nodes.
    Each node produces an $(m-1)$-metric query.
    So, the total is
    \[
        \left(\frac{1}{\e}\right)^{O(\delta_1)}\log \spread + \left(\frac{1}{\e}\right)^{O(\delta_1)}\left(\frac{1}{\e}\right)^{O(\delta-\delta_1)}(m-1)\log\spread.
    \]
    This is at most $\left(\frac{1}{\e}\right)^{O(\delta)}m\log \spread$.
    The last step iterates over at most $k_\e$ points in the nodes in order to produce the final output, giving the desired bound.    
\end{proof}

\section{A Greedy Tree on the Product Metric}\label{sec:non-cascading}

In this section we analyze product range queries on a greedy tree using the $\ell_\infty$-product of the input metrics, rather than preprocessing each metric separately.
That is, the tree is built and traversed using the metric
\[
    \dist(x,y) = \max_{i}\dist_i(x,y).
\]
By Lemmas~\ref{lem:product_dimension} and~\ref{lem:construction}, preprocessing is $2^{O(\delta)}n\log n$, where $\delta$ is the sum of the doubling dimensions of the input metrics.
Additionally, the space is linear in the number of points.
A product range query is similar to a regular range query, but it is modified to use the different input radii.
As no change is made to the structure of the greedy tree, Lemma~\ref{lem:construction} applies.
The preprocessing time is $2^{O(\delta)}n \log n$ and the space is $O(n)$.
Below, we give the new query algorithm and its analysis.

\subsection{The Query Algorithm}

The following algorithm is similar to range search in any ball tree \cite{chubet2023proximity}, however, the pruning rules are modified to account for all query radii. 
As in Section~\ref{sec:cascading}, the output is a set of tree nodes and a final pass identifies the points in the range.

\begin{figure}
\textsc{Product Range Query}\\
\textbf{Input:} Tree $T$, query $q$, radii $r_1, \dots, r_m$, and $\e > 0$
\begin{enumerate}
    \item Initialize a max-heap $H$ to store tree nodes with the node radius as the key.
    \item Insert the root of $T$ into $H$.
    \item While $H$ is not empty:
    \begin{enumerate}
        \item Remove node $v$ with max radius $r$ from $H$.
        \item If $\forall i, \dist_i(q,v) \le (1+\e)r_i - r$, then add $v$ to $\out$.\label{algo:output_added}
        \item Else if $v$ is not a leaf:
        \begin{enumerate}
            \item Split $v$ into children $v_L$ and $v_R$ with radii $r_L$ and $r_R$ respectively.
            \item If $\forall i,\dist_i(q, v_R) \le r_i + r_R$, add $v_R$ to $H$.
            \item If $\forall i,\dist_i(q, v_L) \le r_i + r_L$, add $v_L$ to $H$.
        \end{enumerate}
    \end{enumerate}
\end{enumerate}
\end{figure}

\begin{theorem}[Correctness]
    The nodes returned by the query algorithm contain all of points in the range.
    Moreover, the nodes are contained in the $(1+\e)$-approximate range.
    That is, if $\out$ is the union of all points in the returned nodes, then
    \[
        P\cap\bigcap_{i=1}^m \ball_i(q,r_i) 
            \subseteq \out 
            \subset P\cap \bigcap_{i=1}^m \ball_i\left(q, (1 + \e) r_i)\right).
    \]
\end{theorem}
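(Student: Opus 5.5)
We prove the two inclusions separately, both using the ball-tree covering property: every point $p\in\pts(v)$ satisfies $\dist(p,v)\le r$, where $\dist(p,v)$ is the product-metric distance from $p$ to the center of $v$ and $r$ is the radius of $v$. Since $\dist=\max_i\dist_i$, this gives $\dist_i(p,v)\le r$ for every $i$ simultaneously. This is the only fact about the tree we use besides finiteness, and it holds because the greedy tree is built on the product metric.

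\emph{Upper inclusion (soundness).} A node $v$ is added to $\out$ only in step~\ref{algo:output_added}, when $\dist_i(q,v)\le(1+\e)r_i-r$ for all $i$. For any $p\in\pts(v)$, the triangle inequality in each $\dist_i$ gives
\[
\dist_i(q,p)\le \dist_i(q,v)+\dist_i(v,p)\le (1+\e)r_i - r + r=(1+\e)r_i .
\]
So every output point lies in $\bigcap_i\ball_i(q,(1+\e)r_i)$.

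\emph{Lower inclusion (completeness).} We maintain the invariant from the proximity-search framework: every $p\in P\cap\bigcap_i\ball_i(q,r_i)$ is either in a node already in $\out$ or in a node currently in $H$. The invariant holds initially because the root covers $P$. When $v$ is removed, it is either output, which preserves the invariant, or split. If $v$ is split, any in-range $p\in\pts(v)$ lies in some child $v_c$ with radius $r_c$, and
\[
\dist_i(q,v_c)\le\dist_i(q,p)+\dist_i(p,v_c)\le r_i+r_c \quad\text{for all } i,
\]
so the pruning test passes and $v_c$ is inserted.

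The main obstacle is the remaining case: $v$ is a leaf that fails the output test, so it is dropped. We must show such a leaf contains no in-range point. We would first argue that a leaf has radius $0$: its cluster is a single point, the center itself. If that point $p$ is in range, then $\dist_i(q,p)\le r_i\le(1+\e)r_i-0$ for every $i$, so the output test would have succeeded. Hence no in-range point is lost. (If the tree's convention allowed positive leaf radii, we would instead note that a leaf is a single point and treat the test via the center; the paper's construction makes leaves singletons, so radius $0$ is natural.)

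Finally, the loop terminates because the tree is finite and each node is inserted at most once. When $H$ is empty, the invariant places every in-range point inside a node of $\out$, which completes the lower inclusion.
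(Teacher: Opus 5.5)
Your proof is correct and follows essentially the same route as the paper: induction on the invariant that every in-range point lies in a node of $H$ or of $\out$, with the triangle inequality showing that the child containing the point passes the pruning test, plus a direct triangle-inequality argument for the upper inclusion. Your explicit handling of a leaf that fails the output test and of termination also fills a case the paper's proof skips: the paper asserts that any node not sent to $\out$ is split, whereas you note that a leaf has radius $0$, so an in-range leaf point always passes the output test.
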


\begin{proof}
    First, we show that the algorithm maintains the invariant that if $x$ is in the query range, then either $x \in \pts(v)$ for some node $v \in \out$, or some node $v \in H$.
    We prove the invariant holds by induction.
    The root of the tree contains all points, so when $H$ is initialized the invariant holds.
    
    Assume the invariant holds at some the beginning of some iteration of the main loop.
    Consider an arbitrary point $x$ in the query range $\bigcap_{i=1}^m\ball_{d_i}(q, r_i)$.
    If $x$ has not already been added to $\out$, then let $v$ be the node in $H$ such that $x \in \pts(v)$.
    If $v$ is removed from the heap and added to $\out$, then the invariant holds for $x$.
    Otherwise, $v$ is not added to $\out$ at this step and it is split into two smaller nodes.
    Note that $x \in v_R$ or $x \in v_L$, and denote the node containing $x$ by $v'$ with radius $r'$.
    Then for all $i$, $d_i(v', q) \le d_i(v', x) + d_i(x,q) \le r' + r_i$.
    So, the algorithm adds $v'$ to $H$.
    That is, the node containing $x$ is still viable.
    The same logic applies to all points in the range, so the invariant holds after this iteration.

    Now, we show that any $x \in \out$ is approximately contained in the query range.
    Let $v \in \out$ be the node containing $x$, and let $r$ be the radius of $v$.
    If $v$ was added to $\out$ in step (\ref{algo:output_added}) then $d_i(q,v) \le (1+\e)r_i - r$ for all $i$.
    So, $d_i(x,q) \le d_i(q,v) + d_i(x,v) \le r + (1+\e)r_i - r = (1+\e)r_i$ for all $i$.
    It follows that $x\in \bigcap_{i=1}^m \ball_i\left(q, (1 + \e) r_i)\right)$.
\end{proof}

\subsection{The Query Time}

\begin{figure}[h]
    \center
    \includegraphics[width=0.8\textwidth]{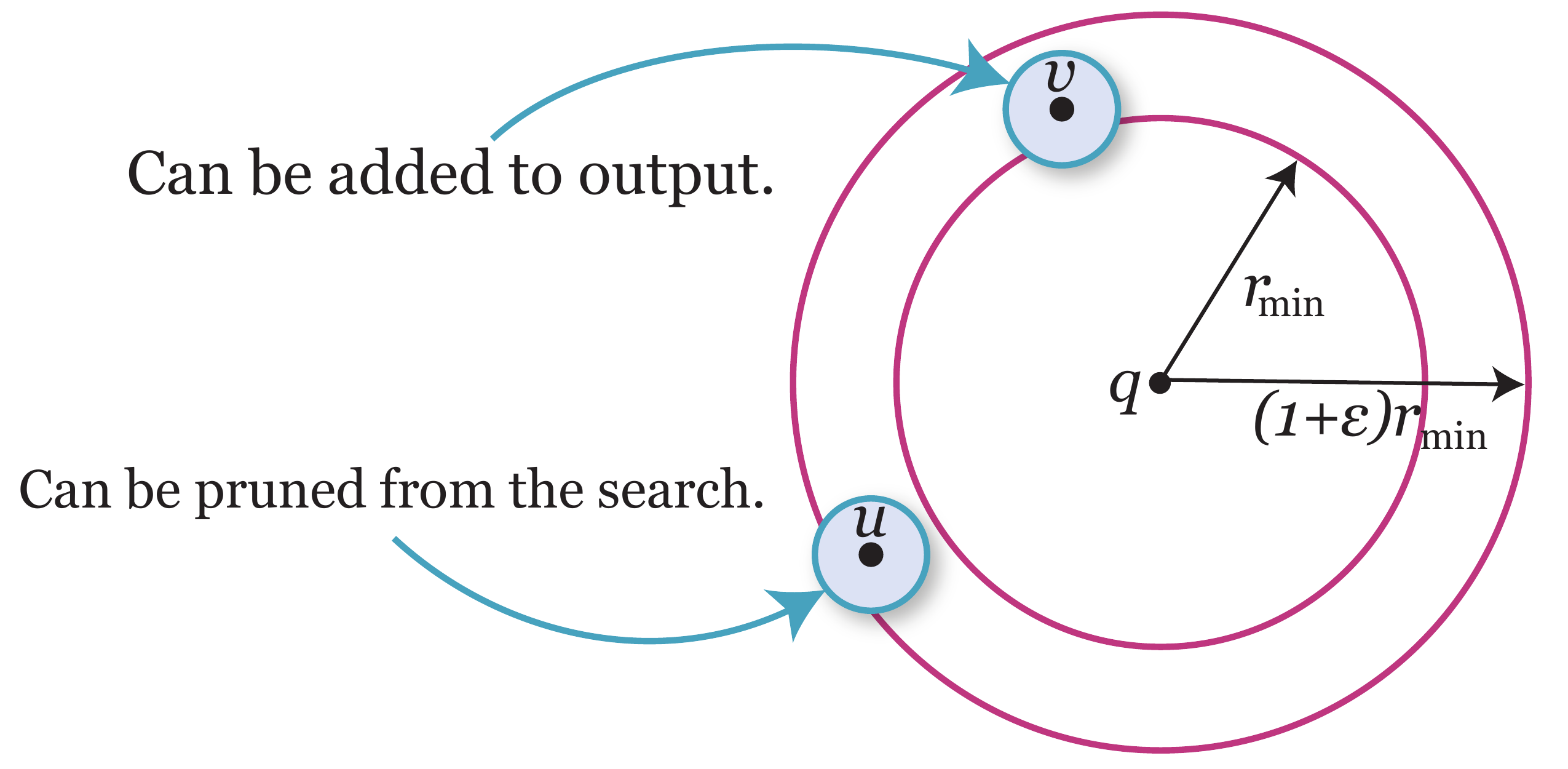}
    \caption{Every node with radius at most $\frac{\e}{2}r_{\min}$ is either immediately added to the output (in step 2) or will be pruned from the search (in step 3).\label{fig:in_or_out}}
\end{figure}


\begin{theorem}
Let $d_1, d_2, \ldots, d_m$ be metrics with doubling dimensions
$\delta_1, \delta_2, \ldots, \delta_m$ respectively.
Then a product range query $(q, r_1, r_2, \ldots, r_m, \e)$ takes
\[
    \left(\frac{A}{\e}\right)^{O(\delta)}\log\spread + O(k_\e)
\]
time, where $\e<\frac{1}{2}$, $\delta := \sum_{i=1}^m \delta_i$, $k_\e$ is the size of the $(1+\e)$-approximate output, and $A = \frac{\max_{i} r_i}{\min_{j} r_j}$.

\end{theorem}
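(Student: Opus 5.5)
We will bound the running time in the width/height framework of Chubet et al.\ for ball trees: total work is at most the number of nodes ever inserted into $H$ (each costs $O(\log n)$ or, with a bucketed heap, $O(1)$ amortized, plus $O(m)$ distance evaluations), and the final pass over returned nodes costs $O(k_\e)$, since by the Correctness theorem every returned point lies in the $(1+\e)$-approximate range and each point lies in at most one returned node. Write $r_{\min}=\min_j r_j$, $r_{\max}=\max_i r_i$, so $A=r_{\max}/r_{\min}$. The key observation (Figure~\ref{fig:in_or_out}) is that no node of radius at most $\frac{\e}{2}r_{\min}$ is ever split. Suppose such a node $v$ of radius $r$ was inserted into $H$; then $\dist_i(q,v)\le r_i+r$ for all $i$. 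If some $i$ has $\dist_i(q,v) > (1+\e)r_i - r$, then $r_i + r \ge \dist_i(q,v) > (1+\e)r_i - r$, i.e.\ $2r>\e r_i\ge \e r_{\min}$, a contradiction. So every $v$ with $r\le \frac{\e}{2}r_{\min}$ goes directly to $\out$. Hence only nodes of radius greater than $\frac{\e}{2}r_{\min}$ are split.

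Next I would count, for each scale, the split nodes. Group the nodes removed from $H$ by radius into dyadic scales $(2^{j-1},2^j]$ with $2^j \gtrsim \frac{\e}{2}r_{\min}$. By the packing/covering properties of greedy trees (centers of nodes whose radius is in a given scale, alive in the heap simultaneously or taken from a fixed scale level, are $\Omega(2^j)$-packed in the product metric $\dist$), and since each viable node $v$ satisfies $\dist(q,v)\le \max_i(r_i+r)\le r_{\max}+r$, those centers lie in the product ball $\ball(q, r_{\max}+2^j)$. Lemma~\ref{lem:packing} together with Lemma~\ref{lem:product_dimension} (giving $\dim(P,\dist)\le\delta$) yields at most $\left(\frac{r_{\max}+2^j}{2^j}\right)^{O(\delta)}$ nodes per scale, which for $2^j\ge \frac{\e}{2}r_{\min}$ is $\left(1+\frac{2r_{\max}}{\e r_{\min}}\right)^{O(\delta)}=\left(\frac{A}{\e}\right)^{O(\delta)}$. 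The number of scales is $O(\log\spread)$, since radii range between the minimum and maximum pairwise distances (this is exactly the height bound $h$ of the search). Multiplying gives $\left(\frac{A}{\e}\right)^{O(\delta)}\log\spread$ split nodes, and children inserted are at most twice that; nodes output without splitting are children of split nodes, so they are also counted. Adding the final $O(k_\e)$ pass gives the theorem, with heap overhead absorbed as in the standard greedy-tree range search analysis.

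The main obstacle is the packing step: getting a clean per-scale packing guarantee for greedy tree nodes (the tree is binary and a node's radius need not halve at each split, so ``one scale'' must be defined via the greedy permutation insertion radii, where $p_i$'s node radius relates to $\dist(p_i,P_i)$, and those centers are packed at that scale). I would first try to import directly the width bound from Chubet et al., which states that nodes in a range search of radius $R$ stopping at radius $\rho$ number $(R/\rho)^{O(\dim)}$, applied in the product metric with $R=r_{\max}$ and $\rho=\frac{\e}{2}r_{\min}$; the product query region is contained in the product ball of radius $r_{\max}$, so the only new ingredient over the single-metric case is the stopping radius being tied to $r_{\min}$ rather than $r_{\max}$, which is precisely where the factor $A$ enters.
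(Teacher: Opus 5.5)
Your argument is correct in substance, but it organizes the count differently from the paper.

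The paper works in the $wh\log w$ framework of Chubet et al. It takes $h$ to be the tree height $2^{O(\delta)}\log\spread$ and spends its effort on the width: while splits still occur, the current maximum radius $r$ in $H$ exceeds $\frac{\e}{2}r_{\min}$, and the nodes of $H$ are $cr$-packed inside $\ball(q,r+r_{\max})$, so $w=(A/\e)^{O(\delta)}$.

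You instead bound the total number of split nodes band by band in radius and multiply by the $O(\log\spread)$ bands. You use the same two ingredients, and your derivation that a viable node of radius at most $\frac{\e}{2}r_{\min}$ must pass the output test is a cleaner justification of the paper's one-line claim.

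What your route buys is order-independence: which nodes get inserted and split does not depend on the heap at all. Summing over bands would even give an additive count of $2^{O(\delta)}\log\spread+(A/\e)^{O(\delta)}$ visited nodes.

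What the paper's route buys is control of $|H|$, which yours lacks. For the max-heap as written, each operation costs $O(\log|H|)$. Bounding $|H|$ only by the number of visited nodes leaves a stray $\log\log\spread$ factor, and your $O(\log n)$ per operation would leave a $\log n$. So you need either the paper's width bound or the remark that, by order-independence, a stack can replace the heap.

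Finally, as you anticipated, the per-band packing needs a multiplicity bound, since a chain of same-center nodes can have several radii in one band. That bound has to come from the greedy-tree packing--covering properties of Chubet et al. The paper's snapshot of $H$ avoids the issue, because simultaneously present heap nodes are disjoint and hence have distinct centers.
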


\begin{proof}
    Using the framework of~\cite{chubet2023proximity}, a range query in the greedy tree takes $wh\log w + O(k_\e)$ time where $w$, the \emph{width}, is the maximum cardinality of $H$ throughout the algorithm and $h$, the \emph{height}, is the maximum number of times, over the points $x \in P$, that the node containing $x$ is split.
    The height $h$ is bounded from above by the height of the tree, which is $2^{O(\delta)}\log\spread$ by Lemma~\ref{lem:packing}.
    So it is sufficient to bound the width.

    At any time in the algorithm, let $r$ be the maximum radius of a node in $H$.
    Let $r_{\max} := \max_{1\le i \le m}r_i$, and $r_{\min} := \min_{1\le i\le m}r_i$.
    
    For any node $v \in H$, $v \in \ball(q, r + r_{\max})$.
    Additionally, the algorithm guarantees that $r > \frac{\e}{2} r_{\min}$.
    Anything smaller than that and the node is either pruned or sent to the output (See Figure~\ref{fig:in_or_out}).
    As shown in~\cite{chubet2023proximity}, the nodes of $H$ will always be $cr$-packed for some fixed constant $c\le 1$.
    So by the Standard Packing (Lemma~\ref{lem:packing}), we have
    \[
        w \le \left(4 \frac{r+ r_{\max}}{cr}\right)^{\delta} \le \left(\frac{4}{c} + \frac{r_{\max}}{c\e r_{\min}}\right)^\delta = \left(\frac{A}{\e}\right)^{O(\delta)}.
    \]
    So, with a width $\left(\frac{A}{\e}\right)^{O(\delta)}$ and a height $2^{O(d)}\log \spread$, we get the desired bound
    \[
        wh\log w + O(k_\e) = \left(\frac{A}{\e}\right)^{O(\delta)}\log\spread + O(k_\e).
    \]
\end{proof}

\section{Conclusion}
The overall goal of this paper was to present two new data structures for approximate range search in product metric spaces, using greedy permutations to unify multiple metrics into a single searchable structure. Our results show that it's possible to keep queries fast and storage efficient, even when working across several metrics. While adding or removing a single point currently requires rebuilding the structure, batching updates, such as inserting or deleting many points at once, can be handled more efficiently. In future work, we plan to implement these structures, test their performance, and explore practical strategies for supporting dynamic updates. Overall, these tools offer a simple and scalable approach to multi-metric search that is both theoretically sound and useful in practice.

\subsection{Applications}

Many real-world datasets involve multiple notions of similarity that a single metric doesn't fully capture. For instance, recommendation systems may compare users based on both temporal behavior and content preferences \cite{ieee7498318}, while molecular data in biology might depend on both spatial structure and chemical properties. Multi-metric range search offers a way to filter such data using multiple criteria simultaneously, similar to how sensor networks can query events across multiple attributes like temperature and light levels \cite{LiKimGovindanHong2003}.

\vspace{3mm}

\noindent The data structures proposed in this paper are well-suited for these settings. They support efficient approximate range queries across multiple metrics without requiring separate structures for each one. The Greedy Range Tree is particularly useful when metrics vary in cost or behavior, allowing for progressive filtering. The Single Greedy Tree, while it is simpler to construct, performs well when the product metric is well-structured and balanced.

\subsection{Acknowledgements}\label{sec:acknowledgements}

The authors thank the U.S. National Science Foundation for supporting this project as part of the Research Experience for Undergraduates program under grant CCF-2349179.
We also thank the anonymous reviewers for their helpful suggestions in improving the submitted draft.
\bibliographystyle{abbrv}




\end{document}